\documentclass[12pt,leqno]{article}

\input{style}
\numberwithin{equation}{section}
% \usepackage{refcheck}
% \usepackage{todonotes}
%%%%%%%%%%%%%%%%%%%%%%%%%%%%%%%%%%%%%%%%%%%%%%%%%%%%%%%%%%%%%%%%%%
\begin{document}
%%%%%%%%%%%%%%%%%%%%%%%%%%%%%%%%%%%%%%%%%%%%%%%%%%%%%%%%%%%%%%%%%%

\title{A model for a large investor trading at market indifference
  prices. I: single-period case.}

\author{P. Bank \thanks{This research was supported in part by the
    National
    Science Foundation under grant DMS-0505021.}\\
  Technische Universit{\"a}t Berlin\\
  and Quantitative Products Laboratory \\
  Institut f{\"u}r Mathematik\\
  Stra{\ss}e des 17. Juni 135, 10623 Berlin, Germany \\
  (bank@math.tu-berlin.de) \and D. Kramkov \thanks{The author also
    holds a part-time position at the University of Oxford. This
    research was supported in part by the National Science Foundation
    under grant DMS-0505414 and by the Oxford-Man Institute for
    Quantitative Finance at the University of Oxford.} \\
  Carnegie Mellon University,\\
  Department of  Mathematical Sciences,\\
  5000 Forbes Avenue, Pittsburgh, PA, 15213-3890, US \\
  (kramkov@cmu.edu)} \date{\today}

\maketitle
\begin{abstract}
  We develop a single-period model for a large economic agent who
  trades with market makers at their utility indifference prices.  We
  compute the sensitivities of these market indifference prices with
  respect to the size of the investor's order. It turns out that the
  price impact of an order is determined both by the market makers'
  joint risk tolerance and by the variation of individual risk
  tolerances. On a technical level, a key role in our analysis is
  played by a pair of conjugate saddle functions associated with the
  description of Pareto optimal allocations in terms of the aggregate
  utility function.
\end{abstract}

\begin{description}
\item[Keywords:] aggregate utility function, Bertrand competition,
  demand pressure, equilibrium, large investor, liquidity, Pareto
  allocation, price impact, risk-tolerance, utility indifference
  prices.
\item[JEL Classification:] G11, G12, G13, C61.
\item[AMS Subject Classification (2010):] 52A41, 60G60, 91G10, 91G20.
\end{description}

% \tableofcontents

\section{Introduction}
\label{sec:introduction}

Most models studied in Mathematical Finance specify price dynamics
exogenously, e.g., by some semi-martingale which satisfies certain
economically sensible conditions, most notably the absence of
arbitrage opportunities, and whose behavior is not affected by an
investor's trading strategy.  While being linear with respect to the
order flow and therefore convenient to analyze, such models inevitably
ignore the fundamental economic principle that prices are formed by
demand and supply. This idealization is justified from a practical
viewpoint as long as the trading volume remains small enough to be
easily covered by market liquidity. But when transactions involve a
sizable part of `the market' for a security, a model for their impact
on asset prices is clearly called for.

It is the purpose of this paper and its companion~\cite{BankKram:11b}
to systematically develop such a price impact model in, respectively,
a one-period and a continuous-time framework.  Of course, price impact
models have been proposed before with different goals in mind and we
refer to the survey by \citet{GokayRochSoner:11} for a more detailed
overview. For instance, \citet{AlmgChr:01} and \citet{SchSchon:09}
seek to find optimal execution strategies for selling a large quantity
of shares, while \citet{FreyStr:97, PlatSch:98, PapanSirc:98} and
\citet{CvitMa:96} as well as \citet{CetinJarrProt:04} focus on option
pricing in illiquid markets. The impact of transactions on asset
prices in their models is typically specified explicitly, in a
mathematically convenient form, but not derived systematically from
equilibrium-based considerations of economic theory.

There is a vast literature on price formation in economics and we
refer to \citet{AmihudMendPeder:05}, \citet{BiaisGlosSpatt:05} and
\citet{Hara:95} for surveys of financial models with price impact. As
in the model of \citet{Stoll:78}, also in our framework prices are
quoted by some market makers and the price impact is due to their risk
aversion against growing inventories of risky securities. In line with
the utility based valuation methods common in mathematical finance, we
add the feature that our market makers can hedge against the incurred
risk by trading freely among themselves. We argue that, due to
competition, this hedging opportunity leads to more favorable quotes
for the proposed transaction and in fact eventually leaves every
market maker indifferent between the pre- and the post-transaction
allocation of wealth.
 
We develop a mathematical framework for the analysis of these market
indifference prices. The key tool in our study is the convex duality
theory for saddle functions as presented by \citet{Rock:70}. This
allows us, for instance, to compute the asymptotic expansion of market
indifference prices with respect to the order size. It turns out that
these expansions have a component which can be described in terms of a
representative agent's risk aversion and another component which
reflects the diversity of risk aversions among the market makers.

The paper is structured as follows. In Section~\ref{sec:model} we give
the mathematical specification of our model and continue the
discussion of its economic features and some of the pertaining
literature.  In this section we also obtain a basic result on the
existence and uniqueness of market indifference prices. A detailed
study of these prices including the computation of their sensitivities
with respect the order size constitutes our main result and is
presented in Section~\ref{sec:analysis}. This analysis relies on a
number of technical facts related to the classical parametrization of
Pareto optimal allocations in terms of the aggregate utility
function. They are collected in Section~\ref{sec:Pareto}.

\section{Model}
\label{sec:model}

We consider a single-period financial model with initial time~$0$ and
maturity~$1$ where $M\in \braces{1,2,\dots}$ market makers quote
prices for $J\in \braces{1,2,\dots}$ traded assets. These assets are
specified through their payoffs $\psi=(\psi^1,\ldots,\psi^J)$ at time
$1$ which are random variables defined on a complete probability space
$(\Omega, \mathcal{F}, \mathbb{P})$.  As usual, we identify random
variables differing on a set of measure zero and use the notations
$\mathbf{L}^0(\mathbf{R}^d)$, for the metric space of such equivalence
classes with values in $\mathbf{R}^d$ and convergence in probability,
and $\mathbf{L}^p(\mathbf{R}^d)$, $p\geq 1$, for the Banach space of
$p$-integrable random variables.

The way the market makers serve the incoming orders crucially depends
on their attitude toward risk, which we model in the classical
framework of expected utility. Thus, we interpret the probability
measure $\mathbb{P}$ as a description of the common beliefs of our
market makers (same for all, for simplicity) and denote by
$u_m=(u_m(x))_{x\in \mathbf{R}}$ market maker $m$'s utility function
for terminal wealth.
  
\begin{Assumption}
  \label{as:1} Each $u_m = u_m(x)$, $m=1,\dots,M$, is a strictly
  concave, strictly increasing, continuously differentiable, and
  bounded above function on the real line $\mathbf{R}$ satisfying
  \begin{equation}
    \label{eq:1}
    \lim_{x\to \infty} u_m(x) = 0.
  \end{equation}
\end{Assumption}

The normalization to zero in~\eqref{eq:1} is added only for notational
convenience. Many of our results will be derived under the following
additional condition on the utility functions, which, in particular,
implies their boundedness from above.

\begin{Assumption}
  \label{as:2}
  Each utility function $u_m = u_m(x)$, $m=1,\dots,M$, is twice
  continuously differentiable and its absolute risk aversion
  coefficient is bounded away from zero and infinity, that is, for
  some $c>0$,
  \begin{equation}
    \label{eq:2} 
    \frac1c \leq a_m(x) \set -\frac{u_m''(x)}{u_m'(x)}
    \leq c, \quad x \in \mathbf{R}.
  \end{equation}
\end{Assumption}

The prices quoted by the market makers are also influenced by their
initial endowments $\alpha_0=(\alpha^m_0)_{m=1,\dots,M} \in
\mathbf{L}^0(\mathbf{R}^M)$, where $\alpha^m_0$ is an
$\mathcal{F}$-measurable random variable describing the terminal
wealth of the $m$th market maker if no orders have to be filled. We
assume that the initial allocation $\alpha_0$ is \emph{Pareto
  optimal}, that is, there is no strictly better re-allocation of the
same resources in the sense of the following well-known

\begin{Definition}
  \label{def:1}
  A vector of $\mathcal{F}$-measurable random variables
  $\alpha=(\alpha^m)_{m=1,\dots,M}$ is called a \emph{Pareto optimal
    allocation} if
  \begin{equation}
    \label{eq:3}
    \mathbb{E}[|u_m(\alpha^m)|] < \infty, \quad
    m=1,\dots,M,
  \end{equation}
  and if there is no other allocation $\beta \in
  \mathbf{L}^0(\mathbf{R}^M)$ with the same total endowment,
  \begin{displaymath}
    \sum_{m=1}^M \beta^m=\sum_{m=1}^M \alpha^m,
  \end{displaymath}
  which leaves all market makers not worse and at least one of them
  better off in the sense that
  \begin{displaymath}
    \mathbb{E}[{u_m(\beta^m)}] \geq \mathbb{E}[{u_m(\alpha^m)}]
    \mtext{for all} m=1,\dots,M,
  \end{displaymath}
  and
  \begin{displaymath}
    \mathbb{E}[{u_m(\beta^m)}] > \mathbb{E}[{u_m(\alpha^m)}]
    \mtext{for some} m \in \braces{1,\dots,M}. 
  \end{displaymath}
\end{Definition}

Suppose that at time $0$ the market makers fill an order from a large
investor for the quantity $q\in \mathbf{R}^J$ of the assets $\psi\in
\mathbf{L}^0(\mathbf{R}^J)$.  As a result, the market makers' total
endowment changes from $\Sigma_0 \set \sum_{m=1}^M \alpha^m_0$ to
\begin{displaymath}
  \Sigma(x,q) \set \Sigma_0 + x + \ip{q}{\psi} =
  \Sigma_0 + x + \sum_{j=1}^J q^j \psi^j,
\end{displaymath} 
where $\ip{\cdot}{\cdot}$ denotes the Euclidean scalar product and
$x\in \mathbf{R}$ and $q\in\mathbf{R}^J$ stand for, respectively, the
cash amount and the number of assets acquired by the market
makers. Our model will assume that $\Sigma(x,q)$ is re-allocated among
the market makers in the form of a Pareto optimal allocation. For this
to be possible we have to impose

\begin{Assumption}
  \label{as:3}
  For any $x\in \mathbf{R}$ and $q\in \mathbf{R}^J$ there is an
  allocation $\beta \in \mathbf{L}^0(\mathbf{R}^M)$ with the total
  endowment $\Sigma(x,q)$ such that
  \begin{equation}
    \label{eq:4}
    \mathbb{E}[u_m(\beta^m)] > - \infty, \quad m=1,\dots,M. 
  \end{equation}
\end{Assumption}

\begin{Remark}
  \label{rem:1}
  Under Assumption~\ref{as:1}, Lemma~\ref{lem:1} below shows that
  Assumption~\ref{as:3} is equivalent to the finiteness of the
  aggregate expected utility. Under Assumptions~\ref{as:1}
  and~\ref{as:2}, Lemma~\ref{lem:3} proves that Assumption~\ref{as:3}
  is also equivalent to the existence of all exponential moments for
  $\psi$ under the pricing measure $\mathbb{Q}_0$ associated with the
  initial Pareto optimal allocation $\alpha_0$.
\end{Remark}

\subsection{Market indifference prices}
\label{sec:exist-mark-indiff}

Theorem~\ref{th:1} below identifies uniquely both the transaction
price $x=x(q)$ associated with an order $q$ and the market makers'
post-transaction allocation of wealth $\alpha_1=\alpha_1(q)$ under the
following two conditions:
\begin{description}
\item[Pareto optimality:] The random endowment $\Sigma(x,q)$ is
  redistributed between the market makers to form a new \emph{Pareto
    optimal} allocation $\alpha_1$.
\item[Utility indifference:] The market makers' expected utilities
  \emph{do not change}:
  \begin{equation}
    \label{eq:5}
    \mathbb{E}[u_m(\alpha^m_1)] =  \mathbb{E}[u_m(\alpha^m_0)],  \quad
    m=1,\dots,M. 
  \end{equation}
\end{description}
We postpone the discussion of the economic motivation behind these two
conditions until Section~\ref{sec:model-discussion}. By analogy with
the popular {utility-based valuation} method in mathematical finance,
we call $x=x(q)$ the \emph{market indifference price} of the
order~$q$.

\begin{Theorem}
  \label{th:1}
  Under Assumptions~\ref{as:1} and~\ref{as:3}, every position $q\in
  \mathbf{R}^J$ yields a unique cash amount $x=x(q)$ and a unique
  Pareto optimal allocation $\alpha_1=\alpha_1(q)$ of $\Sigma(x,q)$
  preserving the market makers' expected utilities in the sense of
  \eqref{eq:5}.
\end{Theorem}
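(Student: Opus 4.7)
I would argue uniqueness first. Suppose $(x_1, \alpha_1)$ and $(x_2, \alpha_2)$ are two solutions with $x_1 \leq x_2$. If $x_1 < x_2$, then the shifted allocation $\widetilde\alpha^m := \alpha_1^m + (x_2-x_1)/M$ redistributes $\Sigma(x_2, q)$ and, by strict monotonicity of each $u_m$, satisfies $\mathbb{E}[u_m(\widetilde\alpha^m)] > \mathbb{E}[u_m(\alpha_1^m)] = \mathbb{E}[u_m(\alpha_0^m)] = \mathbb{E}[u_m(\alpha_2^m)]$ for every $m$, contradicting the Pareto optimality of $\alpha_2$. Hence $x_1 = x_2 =: x(q)$. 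For the uniqueness of the allocation given $x(q)$, any two Pareto optimal allocations $\alpha_1, \alpha_1'$ of $\Sigma(x(q),q)$ sharing the same expected utility vector must agree almost surely: otherwise the midpoint $\tfrac12(\alpha_1+\alpha_1')$ is another allocation of the same total endowment and, by strict concavity, yields $\mathbb{E}[u_m(\tfrac12(\alpha_1^m+\alpha_1'^m))]$ at least $\mathbb{E}[u_m(\alpha_1^m)]$ with strict inequality for any $m$ where $\alpha_1^m\neq \alpha_1'^m$ on positive measure, contradicting Pareto optimality.

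\textbf{Existence set-up.} For existence I would work with the ``cost'' infimum
$$x(q) := \inf\Big\{x\in \mathbf{R} : \exists\,\beta\in \mathbf{L}^0(\mathbf{R}^M),\ \sum_{m=1}^M \beta^m = \Sigma(x,q),\ \mathbb{E}[u_m(\beta^m)] \geq \mathbb{E}[u_m(\alpha_0^m)]\ \forall m\Big\}.$$
The feasible set is nonempty for $x$ large enough: start from any $\beta$ provided by Assumption~\ref{as:3} and shift enough cash into each component, using $u_m(x)\to 0$ as $x\to\infty$ (see~\eqref{eq:6}) to meet the utility constraints. Conversely, Jensen's inequality applied componentwise together with~\eqref{eq:7} gives a lower bound on each $\mathbb{E}[\beta^m]$ for any feasible $\beta$, and summing shows $x(q) > -\infty$.

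\textbf{Attainment and Pareto property.} The core of the proof is to show the infimum is attained by an allocation that is Pareto optimal and matches the utility vector of $\alpha_0$. I would take a minimizing sequence $(x_n, \beta_n)$ and, invoking Theorem~\ref{th:7}, replace each $\beta_n$ by the Pareto optimum of the same total endowment whose utility vector weakly dominates that of $\beta_n$. Such Pareto optima are coupled through a single random Lagrange multiplier $\eta_n$ via $\lambda_n^m u_m'(\beta_n^m) = \eta_n$, which tightly constrains all components of $\beta_n$ and, after normalizing $\lambda_n$, delivers compactness in $\mathbf{L}^0$ along a subsequence. Fatou-type passage to the limit (using the upper bound on each $u_m$) produces a candidate $\alpha_1$ with $\sum_m \alpha_1^m = \Sigma(x(q),q)$ and $\mathbb{E}[u_m(\alpha_1^m)] \geq \mathbb{E}[u_m(\alpha_0^m)]$. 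A transfer argument then forces equality for every $m$ and simultaneously yields Pareto optimality: if the inequality were strict for some $m$, subtracting a small cash amount from $\alpha_1^m$ would give a feasible allocation at strictly smaller $x$, contradicting the definition of $x(q)$.

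\textbf{Main obstacle.} The hardest step will be securing compactness for the minimizing sequence together with upper semicontinuity of the functionals $\mathbb{E}[u_m(\cdot)]$ along it, since Assumption~\ref{as:1} alone provides no $L^p$-type integrability. The welfare-weight parameterization of Pareto optima via the common multiplier $\eta_n$, together with the conjugate saddle-function machinery highlighted in the abstract, is what I expect to do the heavy lifting here.
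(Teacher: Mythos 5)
Your overall architecture matches the paper's: define the infimal cash level over all feasible allocations, take a minimizing sequence, extract a limit, and use the minimality to force the equalities and Pareto optimality. The uniqueness argument you give is also in the same spirit as the paper's one-line appeal to strict concavity, with the cash-shift argument filling in the $x$-uniqueness. The transfer argument at the end is essentially what the paper does implicitly when it says ``minimality of $\widehat y$ \dots implies the equalities\dots and the Pareto optimality.''

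Where you diverge is on the attainment step, and that is exactly where your proposal is weaker than necessary. You plan to replace each $\beta_n$ by a Pareto optimum of the same endowment and use the welfare-weight parameterization (Theorem~\ref{th:7}) to get compactness. Two issues with this. First, you need a Pareto allocation that utility-dominates a given $\beta_n$, but Theorem~\ref{th:7} only \emph{characterizes} Pareto optima; establishing that a dominating Pareto optimum exists for an arbitrary feasible $\beta_n$ requires an existence argument of exactly the kind Theorem~\ref{th:1} is trying to prove, so there is a circularity risk you would need to untangle. Second, even if that existence lemma were in hand, controlling the normalized weights $\lambda_n$ away from the boundary of $\mathbf{S}^M$ in the limit is delicate. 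The paper sidesteps all of this: it works directly with the (possibly non-Pareto) minimizing sequence, shows the family of convex combinations is bounded in $\mathbf{L}^0$ — the negative parts are bounded in $\mathbf{L}^1$ via the elementary bound $y^- \le c(-u_m(y))$ coming from Assumption~\ref{as:1}, and the positive parts are controlled by the single upper bound $\sum_m \beta_n^m \le \Sigma(y_1,q)$ — and then extracts a.s.\ convergent convex combinations by the Koml\'os-type Lemma~A1.1 of \cite{DelbSch:94}. Fatou then closes the argument. No saddle-function machinery is needed for Theorem~\ref{th:1}.

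There is also a small gap in your lower bound for $x(q)$. Jensen gives $\mathbb{E}[\beta^m] \ge u_m^{-1}\bigl(\mathbb{E}[u_m(\alpha_0^m)]\bigr)$ for each feasible $\beta$, but summing only controls $x$ if $\mathbb{E}[\Sigma(x,q)^+]<\infty$; Assumptions~\ref{as:1} and~\ref{as:3} guarantee $\mathbb{E}[\Sigma(x,q)^-]<\infty$ but place no restriction on the positive part, so $\mathbb{E}[\Sigma(x,q)]$ may be $+\infty$ and your sum is then vacuous. In the paper's argument, finiteness of $\widehat y$ is a byproduct of the $\mathbf{L}^0$-extraction: the a.s.\ finite limit $\zeta$ satisfies $\sum_m \zeta^m \le \Sigma(\widehat y,q)$, which is impossible if $\widehat y = -\infty$.
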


\begin{proof}
  For a real number $y$ denote by $\mathcal{B}(y)$ the family of
  allocations $\beta = (\beta^m)_{m=1,\dots,M}$ with total endowment
  less than $\Sigma(y,q)$ and such that
  \begin{displaymath}
    \mathbb{E}[{u_m(\beta^m)}] \geq
    \mathbb{E}[{u_m(\alpha_0^m)}], \quad m=1,\dots,M. 
  \end{displaymath}
  By Assumptions~\ref{as:1} and~\ref{as:3}, this set is non-empty for
  sufficiently large $y$ and, by the concavity of utility functions,
  is a convex subset of $\mathbf{L}^0(\mathbf{R}^M)$.  Denote
  \begin{displaymath}
    \widehat y \set \inf \descr{y\in \mathbf{R}}{\mathcal{B}(y)\not=
      \emptyset},
  \end{displaymath}
  let $(y_n)_{n\geq 1}$ be a strictly decreasing sequence of real
  numbers converging to $\widehat y$, and arbitrarily choose $\beta_n
  \in \mathcal{B}(y_n)$, $n\geq 1$.

  From Assumption~\ref{as:1} we deduce the existence of $c>0$ such
  that, for $m=1,\dots,M$,
  \begin{displaymath} y^- \leq c(-u_m(y)), \quad y\in \mathbf{R},
  \end{displaymath} where $y^- \set \max(0, -y)$; for example, we can
  take
  \begin{displaymath} c = 1/\min_{m=1,\dots,M}u'_m(0).
  \end{displaymath} It follows that
  \begin{displaymath} \mathbb{E}[(\beta_n^m)^-)] \leq c
    \mathbb{E}[(-u_m(\beta^m_n))] \leq c
    \mathbb{E}[(-u_m(\alpha_0^m))] < \infty, \quad n\geq 1,
  \end{displaymath} and, therefore, the sequence $((\beta_n)^-)_{n\geq
    1}$ is bounded in $\mathbf{L}^1(\mathbf{R}^M)$. Since, in addition,
  \begin{displaymath} \sum_{m=1}^M \beta_n^m \leq \Sigma(y_n,q) \leq
    \Sigma(y_1,q),
  \end{displaymath} the family of all possible convex combinations of
  $(\beta_n)_{n\geq 1}$ is bounded in $\mathbf{L}^0(\mathbf{R}^M)$.

  By Lemma A1.1 in \cite{DelbSch:94} we can then choose convex
  combinations $\zeta_n$ of $(\beta_k)_{k\geq n}$, $n\geq 1$, which
  converge almost surely to a random variable $\zeta \in
  \mathbf{L}^0(\mathbf{R}^M)$.  Clearly,
  \begin{equation}
    \label{eq:6} 
    \sum_{m=1}^M \zeta^m \leq \Sigma(\widehat y, q).
  \end{equation} 
  Since the utility functions are bounded from above, Fatou's lemma
  yields
  \begin{equation}
    \label{eq:7}
    \mathbb{E}[u_m(\zeta^m)] \geq \limsup_{n\to\infty}
    \mathbb{E}[u_m(\zeta^m_n)] \geq \mathbb{E}[u_m(\alpha_0^m)], 
  \end{equation}
  where the second estimate holds because $\zeta_n\in
  \mathcal{B}(y_n)$ by the convexity of $\mathcal{B}(y_n)$. It follows
  that $\zeta \in \mathcal{B}(\widehat y)$.  The minimality of
  $\widehat y$ then immediately implies the equalities in \eqref{eq:6}
  and \eqref{eq:7} and the Pareto optimality of $\zeta$. Hence, we can
  select $x = \widehat y$ and $\alpha_1 = \zeta$, thus proving their
  existence.

  Finally, the uniqueness of $x$ and $\alpha_1$ follows from the
  strict concavity of utility functions.
\end{proof}

The preceding proof yields immediately that the market indifference
price $x(q)$ is minimal in the following sense:

\begin{Corollary}
  \label{cor:1}
  Under Assumptions~\ref{as:1} and~\ref{as:3}, the market indifference
  price $x(q)$ associated with an order $q \in \mathbf{R}^J$ is the
  minimal amount $x \in \mathbf{R}$ for which there exists an
  allocation $\beta \in \mathbf{L}^0(\mathbf{R}^M)$ with the total
  wealth $\Sigma(x,q)$ such that
  \begin{displaymath}
    \mathbb{E}[{u_m(\beta^m)}] \geq
    \mathbb{E}[{u_m(\alpha_0^m)}], \quad m=1,\dots,M.
  \end{displaymath}
\end{Corollary}

In Section~\ref{sec:analysis} we show in detail how the market
indifference price $\map{x=x(q)}{\mathbf{R}^J}{\mathbf{R}}$ and the
Pareto allocations
$\map{\alpha_1=\alpha_1(q)}{\mathbf{R}^J}{\mathbf{L}^0(\mathbf{R}^M)}$
determined by Theorem \ref{th:1} depend on the order $q$ of the large
investor.

\subsection{Economic considerations}
\label{sec:model-discussion}

Let us next discuss the scope and limitations of our model as well as
its relation to other models proposed in the literature. The book by
\citet{Hara:95} as well as the surveys by \citet{AmihudMendPeder:05}
and by \citet{BiaisGlosSpatt:05} provide an extensive overview of the
economic background.

Like in the work of \citet{Stoll:78} and \citet{HoStoll:81} also in
our model the market makers' quotes depend on the inventory they hold.
The first distinctive feature of our setup is the assumption that the
market makers share their risk in a Pareto optimal way. Since Pareto
optimal allocations of wealth can be achieved, essentially, only in
complete markets, we are implicitly postulating that, besides the
publicly traded assets $\psi$, the market makers can also trade any
arbitrarily structured product among themselves. In other words, they
have a complete `over the counter market' at their disposal. It would
be interesting to obtain conditions under which this auxiliary market
can be dispensed with and the desired completeness property is
achieved \emph{endogenously} by trading only the primary securities
$\psi$. In this respect, we mention the works by \citet{AnderRaim:08,
  HugMalTrub:12, RiedHerz:13, Kram:13b} on the existence of an
endogenously complete Arrow-Radner equilibrium.

As a second key condition we assume that the market makers are
indifferent between their pre- and post-transaction allocations of
wealth. Given the existence of a complete market at their disposal,
this can be viewed as a consequence of a Bertrand-style competition
among them. Indeed, on one hand Corollary~\ref{cor:1} shows that there
is no way to trade $q$ securities at a transaction price less than
$x(q)$ if no market maker is to loose in terms of expected
utility. On the other hand, if one of the market makers could fill the
order $q$ at a price $x>x(q)$ she could subsequently use a small part
of the difference $x-x(q)$ to offer attractive deals to her fellow
market makers on the over the counter market, e.g., by topping up the
utility preserving allocation $\alpha_1(q)$ of Theorem~\ref{th:1}.
Plainly, all market makers would like to proceed this way. Following
the usual Bertrand competition argument, this leads to a limiting
quote of the market indifference price $x(q)$ and all the market
makers end up being indifferent between the pre- and the
post-transaction allocation of wealth.  It may be interesting to note
that the market indifference price $x(q)$ is smaller than every of the
market makers' reservation prices $x_m(q)$ defined by
\begin{displaymath}
  \mathbb{E}[{u_m(\alpha_0^m+x_m(q)+\ip{q}{\psi}})]
  = \mathbb{E}[{u_m(\alpha_0^m)}], \quad m=1,\dots,M.
\end{displaymath} 
This follows directly from Corollary~\ref{cor:1} and is due to the
market makers' ability to improve their positions by trading with each
other.

There are, of course, several idealizations in our model. For
instance, it assumes that all trades between the market markers are
completed instantly after the transaction with the large investor. But
in real markets it may take time to find counterparties for a hedge
and it may be impossible to negotiate with them the best possible
deals. Hence, to compensate for risks emerging from these frictions,
real market makers will ask for a risk premium and for compensation of
their operational costs both of which are not included in our
transaction price; see \citet{GrosMill:88, DuffGarlPeder:05, Long:01}
and the references therein.  Another idealization is that the market
makers do not act strategically. In particular, they do not account
for adverse selection effects as studied, e.g., by \citet{Kyle:85,
  Back:92, GlosMilg:85, BiaisMartRoch:00}.

As pointed out by \citet{BernHugh:07} the utility indifference may not
hold when, by contrast to our setting, orders can be split \emph{and}
market makers quote price schedules which have to be honored
independently from the fellow market makers and when hedging
opportunities are not taken into account.  Our model is different from
the one in \citet{GarlPederPotes:09} where a representative market
maker maximizes utility from consumption and where prices are
determined so that her optimal position matches some exogenously given
demand for marketed assets. Our market makers also do not exert market
power as investigated, e.g., in the work of \citet{Weretka:11}. While
it would certainly be desirable to account for all these various
aspects of market microstructure, this is clearly beyond the scope of
a single paper.

\section{Parametrization of Pareto allocations}
\label{sec:Pareto}

The analysis of market indifference prices in
Section~\ref{sec:analysis} relies on the parametrization of Pareto
allocations stated, in our setting, in Theorem~\ref{th:2} below. This
parametrization is well-known; see, e.g., \citet{DanaLeVan:96} for a
similar result.

Recall that the \emph{aggregate utility function} is defined as
\begin{equation}
  \label{eq:8}
  r(v,x) \set \sup_{x^1 + \dots + x^M = x} \sum_{m=1}^M v^m u_m(x^m), \quad
  v\in (0,\infty)^M, \; x\in \mathbf{R}.
\end{equation}
One can show, see, e.g., Theorem~3.1 of \cite{BankKram:13b}, that under
Assumption~\ref{as:1} $r = r(v,x)$ is a continuously differentiable
saddle function, strictly convex and decreasing in $v \in
(0,\infty)^M$ and strictly concave and increasing in $x \in
\mathbf{R}$. Moreover, for any $(v,x)\in (0,\infty)^M \times
\mathbf{R}$, the supremum in~\eqref{eq:8} is attained at the vector
$\widehat{x} = (\widehat{x}^m)_{m=1,\dots,M}\in \mathbf{R}^M$ uniquely
determined by
\begin{align}
  \label{eq:9}
  v^m u'_m(\widehat{x}^m) &= \frac{\partial r}{\partial x} (v,x), \\
  \intertext{or, equivalently, by}
  \label{eq:10}
  u_m(\widehat{x}^m) & = \frac{\partial r}{\partial v^m} (v,x), \quad
  m=1, \dots, M.
\end{align}

\begin{Theorem}
  \label{th:2}
  Under Assumption~\ref{as:1}, the following statements are equivalent
  for every $\Sigma \in \mathbf{L}^0(\mathbf{R})$:
  \begin{enumerate}
  \item \label{item:1} The allocation $\alpha =
    (\alpha^m)_{m=1,\dots,M}$ of $\Sigma$ is Pareto optimal.
  \item \label{item:2} The random variables $\alpha =
    (\alpha^m)_{m=1,\dots,M}$ satisfy the integrability condition
    \eqref{eq:3} and there is a a vector $v \in (0,\infty)^M$ such
    that
    \begin{equation}
      \label{eq:11}
      v^m u_m'(\alpha^m) = \frac{\partial r}{\partial x}
      (v, \Sigma), \quad m=1,\dots,M.
    \end{equation}
  \end{enumerate}
  Moreover, such a vector $v$ is defined uniquely up to a scalar
  multiple.
\end{Theorem}

\begin{proof} Denote by $\mathcal{B}=\mathcal{B}(\Sigma)$ the family
  of allocations $\beta\in\mathbf{L}^0(\mathbf{R}^M)$ with the total
  endowment $\sum_{m=1}^M \beta^m = \Sigma$ such that
  \begin{displaymath}
    \mathbb{E}[u^m(\beta^m)] > -\infty, \quad m=1,\dots,M. 
  \end{displaymath}

  \ref{item:1} $\implies$ \ref{item:2}: The integrability
  condition~\eqref{eq:3} for $\alpha$ holds by the definition of a
  Pareto optimal allocation. By the concavity of the utility
  functions, the set
  \begin{displaymath}
    C\set \descr{z\in \mathbf{R}^M}{z^m \leq \mathbb{E}[u_m(\beta^m)], \;
      m=1,\dots,M, \mtext{for some} \beta\in \mathcal{B}}
  \end{displaymath}
  is convex and, by the Pareto optimality of $\alpha$, the point
  \begin{displaymath}
    \widehat z^m \set \mathbb{E}[u_m(\alpha^m)], \quad m=1\dots, M,
  \end{displaymath}
  belongs to the boundary of $C$. Hence, there is a non-zero $v\in
  \mathbf{R}^M$ such that
  \begin{displaymath}
    \ip{v}{\widehat z} \geq \ip{v}{z}, \quad z\in C, 
  \end{displaymath}
  or, equivalently,
  \begin{displaymath}
    \mathbb{E}[\sum_{m=1}^M v^m u_m(\alpha^m)] = \sup_{\beta\in \mathcal{B}}
    \mathbb{E}[\sum_{m=1}^M v^m u_m(\beta^m)] =  \mathbb{E}[r(v,\Sigma)]. 
  \end{displaymath}
  As $v\not=0$, the properties of the utility functions in
  Assumption~\ref{as:1} imply that $v\in
  (0,\infty)^M$. By~\eqref{eq:9}, the fact that the upper bound above
  is attained at $\alpha$ is equivalent to~\eqref{eq:11}.

  2 $\implies$ 1: By the definition of $r=r(v,x)$, for every $\beta
  \in \mathcal{B}$
  \begin{align*}
    \sum_{m=1}^M v^m u_m(\beta^m) \leq r(v,\Sigma) = \sum_{m=1}^M v^m
    u_m(\alpha^m).
  \end{align*}
  Given the integrability requirement \eqref{eq:3}, this clearly
  implies the Pareto optimality of $\alpha$.

  Finally, we note that \eqref{eq:11} determines $v$ uniquely up to a
  scalar multiple.
\end{proof}

It is common to normalize the Pareto weight $v$ from
Theorem~\ref{th:2} by restricting it to the set
\begin{displaymath}
  \mathbf{S}^M \set \descr{w \in (0,1)^M}{\sum_{m=1}^M w^m = 1},
\end{displaymath}
the interior of the simplex; such a $v\in \mathbf{S}^M$ is defined
uniquely.

The following result allows us to parametrize the Pareto allocations
in our economy by the set
\begin{equation}
  \label{eq:12}
  \mathbf{A} \set (0,\infty)^M \times \mathbf{R} \times \mathbf{R}^J,
\end{equation}

\begin{Lemma}
  \label{lem:1}
  Let Assumption~\ref{as:1} hold. Then Assumption~\ref{as:3} is
  equivalent to
  \begin{equation}
    \label{eq:13}
    \mathbb{E}[r(v,\Sigma(x,q))] > -\infty, \quad (v,x,q)\in
    \mathbf{A}.
  \end{equation}

  In this case, for $a=(v,x,q)\in \mathbf{A}$, the random vector
  $\pi(a) \in \mathbf{L}^0(\mathbf{R}^M)$ defined by
  \begin{equation}
    \label{eq:14}
    v^m u'_m(\pi^m(a)) = \frac{\partial{r}}{\partial x}(v,\Sigma(x,q)),\quad
    m=1,\dots,M,  
  \end{equation}
  forms a Pareto optimal allocation. Conversely, for $(x,q)\in
  \mathbf{R}\times \mathbf{R}^J$, any Pareto allocation of the total
  endowment $\Sigma(x,q)$ is given by \eqref{eq:14} for some $v\in
  (0,\infty)^M$.
\end{Lemma}

\begin{proof} Under Assumption~\ref{as:3}, there is an allocation
  $\beta = (\beta^m)_{m=1,\dots,M}$ of $\Sigma(x,q)$
  satisfying~\eqref{eq:4}. Since,
  \begin{displaymath}
    \sum_{m=1}^M v^m u_m(\beta^m) \leq r(v,\Sigma(x,q)) \leq 0, 
  \end{displaymath} 
  we obtain \eqref{eq:13}.

  Assume now that~\eqref{eq:13} holds. By~\eqref{eq:9}, the condition
  \eqref{eq:14} is equivalent to
  \begin{displaymath}
    \sum_{m=1}^M v^m u_m(\pi^m(a)) = r(v,\Sigma(x,q)).   
  \end{displaymath}
  As $v^m>0$ and $u_m<0$, $m=1,\dots,M$, we deduce from \eqref{eq:13}
  that $u_m(\pi^m(a)) \in \mathbf{L}^1$, $m=1,\dots,M$, which, in
  particular, yields Assumption~\ref{as:3}. The Pareto optimality of
  $\pi(a)$ is now an immediate corollary of Theorem~\ref{th:2}.

  Finally, the fact that any Pareto allocation of $\Sigma(x,q)$ is
  given by \eqref{eq:14}, for some $v\in (0,\infty)^M$, follows from
  Theorem~\ref{th:2}.
\end{proof}

In view of the equivalence of~\eqref{eq:9} and~\eqref{eq:10}, the
Pareto allocation $\pi(a)$ can also be defined by
\begin{equation}
  \label{eq:15}
  u_m(\pi^m(a)) = \frac{\partial{r}}{\partial v^m}(v,\Sigma(x,q)),\quad
  m=1,\dots,M.
\end{equation}

A well-known interpretation of the condition~\eqref{eq:14} for the
Pareto allocation $\pi(a)$ is that our market makers agree on the
valuation of marginal trades. The existence of the corresponding
\emph{pricing measure} $\mathbb{Q}(a)$ is established in the following
lemma.

\begin{Lemma}
  \label{lem:2}
  Under Assumptions~\ref{as:1} and~\ref{as:3}, for any $a=(v,x,q)\in
  \mathbf{A}$, there is a probability measure $\mathbb{Q}(a)$ such
  that
  \begin{equation}
    \label{eq:16}
    \frac{d\mathbb{Q}(a)}{d\mathbb{P}} = \frac{\frac{\partial r}{\partial
        x}(v,\Sigma(x,q))}{\mathbb{E}[\frac{\partial r}{\partial
        x}(v,\Sigma(x,q))]} 
    = \frac{u'_m(\pi^m(a))}{\mathbb{E}[u'_m(\pi^m(a))]}, \quad m=1,\dots,
    M,
  \end{equation}
  where $\pi(a) = (\pi^m(a))_{m=1,\dots,M}$ is the Pareto optimal
  allocation defined in \eqref{eq:14}.
\end{Lemma}

\begin{proof}
  In view of Lemma~\ref{lem:1} it is sufficient to verify that
  \begin{displaymath}
    \mathbb{E}[\frac{\partial r}{\partial x}(v,\Sigma(x,q))] < \infty.  
  \end{displaymath}
  This follows from \eqref{eq:13} and the inequality
  \begin{displaymath}
    \frac{\partial r}{\partial x}(v,\Sigma(x,q)) \leq r(v,\Sigma(x,q)) -
    r(v,\Sigma(x-1,q)), 
  \end{displaymath}
  which holds by the concavity of $r(v,\cdot)$.
\end{proof}

We conclude this section with another equivalent re-formulation of
Assumption~\ref{as:3}, this time, in terms of an exponential moment
condition for $\psi$ under the initial pricing measure $\mathbb{Q}_0$.

\begin{Lemma}
  \label{lem:3}
  Under Assumptions~\ref{as:1} and~\ref{as:2} the pricing measure
  $\mathbb{Q}_0$ of the initial Pareto allocation $\alpha_0 =
  (\alpha_0^m)_{m=1,\dots,M}$ is well-defined by
  \begin{displaymath}
    \frac{d\mathbb{Q}_0}{d\mathbb{P}} =
    \frac{u'_m(\alpha^m_0)}{\mathbb{E}[u'_m(\alpha^m_0)]}, \quad
    m=1,\dots, M.  
  \end{displaymath}  
  and Assumption~\ref{as:3} is equivalent to the existence of all
  exponential moments of $\psi = (\psi^j)_{j=1,\dots,J}$ under
  $\mathbb{Q}_0$:
  \begin{equation}
    \label{eq:17}
    \mathbb{E}_{\mathbb{Q}_0}[e^{z \abs{\psi}}] < \infty, \quad z\in
    \mathbf{R}.  
  \end{equation}
\end{Lemma}

\begin{proof}
  Observe first that Assumptions~\ref{as:1} and~\ref{as:2} imply that
  \begin{equation}
    \label{eq:18}
    \frac1c \leq -\frac{u'_m(x)}{u_m(x)} \leq c, \quad x\in
    \mathbf{R}, \quad  m=1,\dots,M, 
  \end{equation}
  with the same constant $c>0$ as in~\eqref{eq:2}.  From this
  and~\eqref{eq:9} we deduce that
  \begin{displaymath}
    \frac{M}c \frac{\partial r}{\partial x}(v,x) \leq - r(v,x) \leq cM 
    \frac{\partial r}{\partial x}(v,x), \quad (v,x)\in
    (0,\infty)^M\times \mathbf{R},
  \end{displaymath}
  and, therefore, 
  \begin{equation}
    \label{eq:19}
    \begin{split}
      -r(v,x+y) &\geq \frac{M}{c} \frac{\partial r}{\partial x}(v,x)
      \exp\left(\frac1{cM}y^- -
        \frac{c}{M}y^+\right), \\
      -r(v,x+y) &\leq cM \frac{\partial r}{\partial x}(v,x) \exp\left(
        \frac{c}{M}y^- - \frac{1}{cM}y^+\right),
    \end{split}
  \end{equation}
  where $y^+\set \max(y,0)$ and $y^- \set \max(-y,0)$.

  For $\mathbb{Q}_0$ to be well-defined we have to verify that
  \begin{displaymath}
    \mathbb{E}[u'_m(\alpha^m_0)] < \infty, \quad m=1,\dots, M. 
  \end{displaymath}
  This follows from \eqref{eq:18} and the inequality
  \begin{displaymath}
    \mathbb{E}[u_m(\alpha^m_0)] >- \infty, \quad m=1,\dots, M,
  \end{displaymath}
  which holds by the definition of a Pareto optimal allocation.

  Let $v_0\in (0,\infty)^M$ denote a Pareto weight of $\alpha_0$ from
  Theorem~\ref{th:2}. The positive homogeneity and the monotonicity
  properties of $r(\cdot,x)$ imply the equivalence of \eqref{eq:13}
  and
  \begin{displaymath}
    \mathbb{E}[r(v_0,\Sigma(x,q))] > -\infty, \quad (x,q)\in
    \mathbf{R}\times \mathbf{R}^J,
  \end{displaymath}
  which by~\eqref{eq:19} and the definition of $\mathbb{Q}_0$ is, in
  turn, equivalent to~\eqref{eq:17}.
\end{proof}

\section{Analysis of market indifference prices}
\label{sec:analysis}

The construction of market indifference prices in Theorem~\ref{th:1}
does not directly allow for an analysis of how these prices actually
depend on the order the market makers' are assumed to fill. We thus
develop in Section~\ref{sec:quantitative_description} below a
quantitative description of these prices. Indeed, Theorem~\ref{th:3}
shows in particular that the market indifference price and the
associated Pareto allocation can be computed from the saddle conjugate
of the representative market maker's expected
utility. Theorem~\ref{th:4} then computes the second order derivatives
of market indifference prices and the sensitivity of their Pareto
allocations with respect to the order
size. Section~\ref{sec:asyptotic_expansion} then uses these to provide
asymptotic expansions of market indifference prices which reveal their
dependence on the market makers' individual and aggregate risk
tolerances.

\subsection{Quantitative description of market indifference prices}
\label{sec:quantitative_description}

A key role in our study of market indifference prices will be played
by the pair of conjugate saddle functions
\begin{align*}
  F_0(a) &\set \mathbb{E}[r(v,\Sigma(x,q))], \quad
  a=(v,x,q) \in \mathbf{A},   \\
  G_0(b) &\set \sup_{v\in (0,\infty)^M}\inf_{x\in
    \mathbf{R}}[\ip{v}{u} + xy - F_0(v,x,q)], \quad b = (u,y,q)\in
  \mathbf{B},
\end{align*}
where $r=r(v,x)$, the aggregate utility function, and $\mathbf{A}$,
the parameter set of Pareto allocations, are defined in~\eqref{eq:8}
and~\eqref{eq:12}, and
\begin{equation*}
  \mathbf{B} \set (-\infty,0)^M \times (0,\infty)
  \times \mathbf{R}^J.
\end{equation*}
These functions represent the initial values of the stochastic fields
$F = (F_0,F_1)$ and $G=(G_0,G_1)$; the terminal values are given by
\begin{align*}
  F_1(a) &\set r(v,\Sigma(x,q)), \quad
  a=(v,x,q) \in \mathbf{A}, \\
  G_1(b) &\set \sup_{v\in (0,\infty)^M}\inf_{x\in
    \mathbf{R}}[\ip{v}{u} + xy - F_1(v,x,q)], \quad b = (u,y,q)\in
  \mathbf{B}.
\end{align*}
The stochastic fields $F$ and $G$ are thoroughly studied in our
companion paper \cite{BankKram:13b} under Assumption~\ref{as:1} and
the condition:
\begin{displaymath}
  F_0(a) =  \mathbb{E}[r(v,\Sigma(x,q))] > -\infty, \quad a=(v,x,q)
  \in \mathbf{A}, 
\end{displaymath}
which, by Lemma~\ref{lem:1}, is equivalent to Assumption~\ref{as:3}.

Theorem~4.1 in \cite{BankKram:13b} describes in detail the properties
of the sample paths of the stochastic fields $F$ and $G$. In
particular, the functions $F_0$ and $G_0$ are continuously
differentiable, and the derivatives of $F_0$ can be computed by taking
expectations of the corresponding derivatives of $F_1$:
\begin{equation}
  \label{eq:20}
  \frac{\partial F_0}{\partial a^i}(a) = 
  \mathbb{E}[\frac{\partial F_1}{\partial a^i}(a)], \quad
  a \in \mathbf{A}.
\end{equation}
Moreover, Theorem~2.2 in \cite{BankKram:13b} shows that, for every
fixed $q\in \mathbf{R}^J$, the following conjugacy relationships
between $(v,x)\in (0,\infty)^M\times\mathbf{R}$ and $(u,y)\in
(-\infty,0)^M\times (0,\infty)$ are equivalent:
\begin{enumerate}
\item \label{item:3} We have $x = \frac{\partial G_0}{\partial
    y}(u,y,q) = G_0(u,1,q)$ and $v = \frac{\partial G_0}{\partial
    u}(u,y,q)$.
\item \label{item:4} We have $y = \frac{\partial F_0}{\partial
    x}(v,x,q)$ and $u = \frac{\partial F_0}{\partial v}(v,x,q)$.
\end{enumerate}
In addition, in this case,
\begin{equation}
  \label{eq:21}
  \frac{\partial G_0}{\partial q}(u,y,q) = - \frac{\partial F_0}{\partial
    q}(v,x,q).
\end{equation} 

For $a\in \mathbf{A}$, recall the notations $\pi(a)$ and
$\mathbb{Q}(a)$ for the Pareto allocation and the pricing measure
from~\eqref{eq:14} and~\eqref{eq:16}, respectively. Also recall that
$\mathbf{S}^M$ denotes the interior of the simplex in $\mathbf{R}^M$.
The following result improves Theorem~\ref{th:1}.

\begin{Theorem}
  \label{th:3}
  Under Assumptions~\ref{as:1} and~\ref{as:3}, for every position
  $q\in \mathbf{R}^J$ there is a unique cash amount $x(q)$ and a
  unique Pareto optimal allocation $\alpha_1(q)$ with total endowment
  $\Sigma(x(q),q)$ such that
  \begin{equation}
    \label{eq:22}
    U^m_0 \set \mathbb{E}[u_m(\alpha^m_0)] = \mathbb{E}[u_m(\alpha^m_1(q))],
    \quad m=1,\dots, M. 
  \end{equation}
  The Pareto optimal allocation $\alpha_1(q)$ has the form
  \begin{equation}
    \label{eq:23}
    \alpha_1(q) = \pi(a(q)),\quad a(q)=(w(q),x(q),q),
  \end{equation}
  where the weights $w(q)\in \mathbf{S}^M$ and the cash amount
  $x(q)\in \mathbf{R}$ are given by
  \begin{align}
    \label{eq:24}
    w^m(q) &= \frac{\partial G_0}{\partial u^m}(U_0,1,q)/\sum_{k=1}^M
    \frac{\partial G_0}{\partial u^k}(U_0,1,q) , \quad m=1,\dots,M. \\
    \label{eq:25}
    x(q) & = G_0(U_0,1,q).
  \end{align}
  The function $\map{x=x(q)}{\mathbf{R}^J}{\mathbf{R}}$ is convex,
  continuously differentiable, and, for $q\in \mathbf{R}^J$,
  \begin{equation}
    \label{eq:26}
    \frac{\partial x}{\partial q^j}(q) =
    -\mathbb{E}_{\mathbb{Q}(a(q))}[\psi^j], \quad j=1,\dots,M.  
  \end{equation}
\end{Theorem}

\begin{proof}
  The uniqueness of the cash amount and of the Pareto optimal
  allocation with the desired properties follows directly from the
  definition of Pareto optimality and the strict concavity of utility
  functions.

  For the existence, consider the cash amount $x(q)$, the weights
  $w(q)$, and the Pareto optimal allocation $\alpha_1(q)$ defined
  by~\eqref{eq:23}--\eqref{eq:25}. Clearly, by the construction of the
  random field $\pi=\pi(a)$, the total endowment of $\alpha_1(q)$
  equals $\Sigma(x(q),q)$. Denoting
  \begin{displaymath}
    v(q) \set \frac{\partial G_0}{\partial u^m}(U_0,1,q)
    \quad\text{and} \quad \widetilde a(q) \set (v(q),x(q),q), 
  \end{displaymath}
  we deduce from the equivalence of items~\ref{item:3}
  and~\ref{item:4} above that
  \begin{align*}
    U^m_0 &= \frac{\partial F_0}{\partial v^m}(\widetilde a(q)), \quad
    m=1,\dots,M, \\
    1 & = \frac{\partial F_0}{\partial x}(\widetilde a(q)),
  \end{align*}
  and then from~\eqref{eq:21} that
  \begin{displaymath}
    \frac{\partial G_0}{\partial q^j}(U_0,1,q)=-\frac{\partial F_0}{\partial
      q^j}(\widetilde a(q)), \quad j=1,\dots,J.
  \end{displaymath}
  Accounting for~\eqref{eq:20} and the constructions of $\pi(a)$ and
  $\mathbb{Q}(a)$ in~\eqref{eq:15} and~\eqref{eq:16} we obtain
  \begin{align*}
    U^m_0 &=\mathbb{E} [\frac{\partial r}{\partial
      v^m}(v(q),\Sigma(x(q),q))] =
    \mathbb{E} [u_m(\pi^m(\widetilde a(q))],  \\
    1 & =
    \mathbb{E}[\frac{\partial r}{\partial x}(v(q),\Sigma(x(q),q)) ], \\
    \frac{\partial G_0}{\partial q^j}(U_0,1,q) & = -
    \mathbb{E}[\frac{\partial r}{\partial
      x}(v(q),\Sigma(x(q),q))\psi^j ] =
    -\mathbb{E}_{\mathbb{Q}(\widetilde a(q))}[\psi^j].
  \end{align*}
  As $w(q)$ is a scalar multiple of $v(q)$, we have $\pi(a(q)) =
  \pi(\widetilde a(q))$ and $\mathbb{Q}_{a(q)} =
  \mathbb{Q}_{\widetilde a(q)}$ and the equalities \eqref{eq:22} and
  \eqref{eq:26} follow.

  Finally, the convexity and the continuous differentiability of
  $x=x(q)$ follow from~\eqref{eq:25} and the properties of $G_0$ given
  in Theorem~4.1 in \cite{BankKram:13b}.
\end{proof}

Hereafter, we denote by $\map{x=x(q)}{\mathbf{R}^J}{\mathbf{R}}$ and
$\map{w=w(q)}{\mathbf{R}^J}{\mathbf{S}^M}$ the market indifference
price and the weights of the post-trade Pareto allocation defined by
\eqref{eq:25} and \eqref{eq:24}, respectively.  The sensitivity
analysis of these maps is accomplished in Theorem~\ref{th:4} below,
whose formulation requires us to introduce extra notations.

From now on, we also work under Assumption~\ref{as:2}.  We define the
market makers' terminal {risk tolerances}
\begin{displaymath}
  \tau^m(a) \set \frac1{a_m(\pi^m(a))} = -
  \frac{u'_m(\pi^m(a))}{u''_m(\pi^m(a))}, \quad m=1,\dots,M,\; a \in
  \mathbf{A},  
\end{displaymath}
and the {aggregate terminal risk-tolerance}
\begin{displaymath}
  R_1(a) \set \sum_{m=1}^M \tau^m(a), \quad a=(v,x,q) \in
  \mathbf{A}.
\end{displaymath}
Taking constant $c>0$ from Assumption~\ref{as:2} we have
\begin{displaymath}
  \frac1c \leq \tau^m(a) \leq c, \quad m=1,\dots,M,\; a \in
  \mathbf{A},
\end{displaymath}
and then
\begin{displaymath}
  \frac{M}c \leq R_1(a) \leq Mc, \quad a \in \mathbf{A}. 
\end{displaymath}
The latter estimate allows us to define a probability measure
$\mathbb{R}(a)$ whose density under $\mathbb{Q}(a)$ is given by
\begin{equation}
  \label{eq:27}
  \frac{d\mathbb{R}(a)}{d\mathbb{Q}(a)} \set
  \frac{1/R_1(a)}{\mathbb{E}_{\mathbb{Q}(a)}[1/R_1(a)]}, \quad
  a \in \mathbf{A},
\end{equation}
as well as the initial aggregate risk-tolerance:
\begin{displaymath}
  R_0(a) \set \mathbb{E}_{\mathbb{R}(a)}[R_1(a)], \quad a\in
  \mathbf{A}. 
\end{displaymath}
The terminology is due to the fact that $R=(R_0,R_1)$ is the
stochastic field of risk-tolerances of the stochastic field
$F=(F_0,F_1)$ of aggregate utilities:
\begin{equation}
  \label{eq:28}
  R_i(a) = -\frac{\partial F_i}{\partial x}(a)/\frac{\partial^2
    F_i}{\partial x^2}(a), \quad i=1,2;  
\end{equation}
see the discussion in \cite{BankKram:13b} just before Lemma~4.8.

Finally, for $a\in \mathbf{A}$, we define the matrices
\begin{align}
  \label{eq:29}
  A^{lm}(a) &=
  \frac1{R_0(a)}\left\{\delta_{lm}\mathbb{E}_{\mathbb{R}(a)}[\tau^m(a)
    R_1(a)]-\Cov{\mathbb{R}(a)}{\tau^l(a)}{\tau^m(a)}\right\},
  \\
  \label{eq:30}
  C^{mj}(a) &= \frac1{R_0(a)} \Cov{\mathbb{R}(a)}{\tau^m(a)}{\psi^j},
  \\
  \label{eq:31}
  D^{ij}(a) &= \frac1{R_0(a)} \Cov{\mathbb{R}(a)}{\psi^i}{\psi^j},
\end{align}
where $\delta_{lm}=\ind{l=m}$ is Kronecker's delta, we let
$l,m=1,\dots,M$ and $i,j=1,\dots, J$, and where
\begin{displaymath}
  \Cov{\mathbb{R}}{\xi}{\eta} \set \mathbb{E}_{\mathbb{R}}[\xi\eta] -
  \mathbb{E}_{\mathbb{R}}[\xi]\mathbb{E}_{\mathbb{R}}[\eta], 
\end{displaymath}
denotes the covariance of random variables $\xi$ and $\eta$ under a
probability measure $\mathbb{R}$. Lemma~4.8 in \cite{BankKram:13b}
shows that $A(a)$ is strictly positive definite:
\begin{displaymath}
  \frac1{c} \abs{z}^2 \leq \ip{z}{A(a)z} \leq  c\abs{z}^2, \quad
  z \in \mathbf{R}^M,
\end{displaymath}
where $c>0$ is the constant from Assumption~\ref{as:2}. In particular,
the inverse matrix $A(a)^{-1}$ is well-defined.

\begin{Theorem}
  \label{th:4}
  Let Assumptions~\ref{as:1},~\ref{as:2}, and~\ref{as:3} hold.  Then
  the Pareto weights $w=w(q)$ are continuously differentiable, the
  market indifference price $x=x(q)$ is two-times continuously
  differentiable and, for $m=1,\dots,M$, $i,j=1,\dots,J$, and $q\in
  \mathbf{R}^J$,
  \begin{align}
    \label{eq:32}
    Z^{mj}(q) &\set \frac1{w^m}\frac{\partial w^m}{\partial q^j}(q) =
    E^{mj}(q) -
    \sum_{k=1}^M w^k(q) E^{kj}(q), \\
    \label{eq:33}
    \frac{\partial^2 x}{\partial q^i \partial q^j}(q) &= H^{ij}(q),
  \end{align}
  where the matrices $E=E(q)$ and $H = H(q)$ have the expressions
  \begin{align}
    \label{eq:34}
    E &= -A^{-1}C, \\
    \label{eq:35}
    H &= C^T A^{-1}C + D,
  \end{align}
  in terms of the matrices $A\set A(a(q))$, $C\set C(a(q))$, and
  $D\set D(a(q))$ computed in \eqref{eq:29}, \eqref{eq:30}, and
  \eqref{eq:31} at $a(q)\set(w(q),x(q),q)$.
\end{Theorem}

The proof relies on the expressions below for the matrices $A(a)$,
$C(a)$, and $D(a)$ in terms of the partial derivatives of $F_0$. In
the case of the matrix $A(a)$, such a formula is given in Lemma~4.8 of
\cite{BankKram:13b}:
\begin{displaymath}
  A^{lm}(v,x,q) = \frac{v^lv^m}{\frac{\partial F_0}{\partial x}}
  \left(\frac{\partial^2 F_0}{\partial v^l\partial 
      v^m} -  \frac1{\frac{\partial^2 F_0}{\partial x^2}} \frac{\partial^2
      F_0}{\partial v^l\partial 
      x}\frac{\partial^2 F_0}{\partial v^m\partial
      x}\right)(v,x,q). 
\end{displaymath}
The corresponding computations for the matrices $C(a)$ and $D(a)$ are
done in the following lemma.

\begin{Lemma}
  \label{lem:4}
  Under Assumptions~\ref{as:1}, \ref{as:2}, and~\ref{as:3}, the
  matrices $C(a)$ and $D(a)$ defined by~\eqref{eq:30}
  and~\eqref{eq:31} can be written as
  \begin{align*}
    C^{mj}(v,x,q) &= \frac{v^m}{\frac{\partial F_0}{\partial x}}
    \left(\frac{\partial^2 F_0}{\partial v^m \partial q^j} -
      \frac1{\frac{\partial^2 F_0}{\partial x^2}} \frac{\partial^2
        F_0}{\partial v^m\partial x} \frac{\partial^2 F_0}{\partial
        x \partial
        q^j}\right)(v,x,q), \\
    D^{ij}(v,x,q) &= \frac{1}{\frac{\partial F_0}{\partial x}}
    \left(-\frac{\partial^2 F_0}{\partial q^i \partial q^j} +
      \frac1{\frac{\partial^2 F_0}{\partial x^2}} \frac{\partial^2
        F_0}{\partial x\partial q^i} \frac{\partial^2 F_0}{\partial
        x\partial q^j}\right)(v,x,q).
  \end{align*}
\end{Lemma}
\begin{proof}
  From the identities for the second derivatives of $r=r(v,x)$
  collected in Theorem~3.2 of~\cite{BankKram:13b} we deduce the
  following expressions for second order derivatives of $F_1(a) =
  r(v,\Sigma(x,q))$:
  \begin{align*}
    v^m\frac{\partial^2 F_1}{\partial v^m\partial x}(a) &=
    -\frac{\partial^2
      F_1}{\partial x^2}(a) \tau^m(a), \\
    v^m\frac{\partial^2 F_1}{\partial v^m\partial q^j}(a) &=
    -\frac{\partial^2
      F_1}{\partial x^2}(a) \tau^m(a)\psi^j, \\
    \frac{\partial^2 F_1}{\partial x\partial q^j}(a) &=
    \frac{\partial^2
      F_1}{\partial x^2}(a) \psi^j, \\
    \frac{\partial^2 F_1}{\partial q^i\partial q^j}(a) &=
    \frac{\partial^2 F_1}{\partial x^2}(a) \psi^i\psi^j.
  \end{align*}
  By Theorem 4.2 in~\cite{BankKram:13b}, the function $F_0 = F_0(a)$
  is twice continuously differentiable and its second order
  derivatives are expectations of the corresponding derivatives for
  $F_1 = F_1(a)$:
  \begin{displaymath}
    \frac{\partial^2 F_0}{\partial a^i\partial a^j}(a) =
    \mathbb{E}[\frac{\partial^2 
      F_1}{\partial a^i\partial a^j}(a)].  
  \end{displaymath} 
  The result now follows by direct computations if we account for the
  identities~\eqref{eq:28} for $R_i(a)$, $i=1,2$ and the expression
  \begin{displaymath}
    \frac{d\mathbb{R}(a)}{d\mathbb{P}} \set \frac{\partial^2 F_1}{\partial
      x^2}(a)/\frac{\partial^2 F_0}{\partial x^2}(a),
  \end{displaymath}
  for the density of $\mathbb{R}(a)$ under $\mathbb{P}$.
\end{proof}

\begin{proof}[Proof of Theorem~\ref{th:4}.]
  Theorem~4.2 in \cite{BankKram:13b} implies that $G_0 = G_0(a)$ is
  twice continuously differentiable. In view of the
  representations~\eqref{eq:24}, for $w(q)$, and~\eqref{eq:25}, for
  $x(q)$, we obtain that $w=w(q)$ is continuously differentiable and
  $x=x(q)$ is twice continuously differentiable. We also deduce that
  the identities \eqref{eq:32} and~\eqref{eq:33} hold with the
  matrices $E(q)$ and $H(q)$ such that
  \begin{align*}
    E^{mj}(q) &= \frac1{\frac{\partial G_0}{\partial u^m}(U_0,1,q)}
    \frac{\partial^2 G_0}{\partial u^m \partial q^j}(U_0,1,q), \\
    H^{ij}(q) &= \frac{\partial^2 G_0}{\partial q^i \partial
      q^j}(U_0,1,q),
  \end{align*}
  where $m=1,\dots,M$ and $i,j=1,\dots,J$.  The fact that these
  matrices can be computed by~\eqref{eq:34} and~\eqref{eq:35} is
  proved in Theorems~4.2 and 2.10 of \cite{BankKram:13b} in the
  situation where the matrices $A(a)$, $C(a)$, and $D(a)$ have the
  above expressions in terms of $F_0$.
\end{proof}

\subsection{Asymptotic expansions for market indifference prices}
\label{sec:asyptotic_expansion}

To facilitate the economic interpretation of the sensitivities given
in Theorem~\ref{th:3} let us compute the second-order expansion for
the market indifference prices $x=x(q)$ with respect to the order size
$q$.  Some additional notations are needed. For a random variable
$\xi$ and a probability measure $\mathbb{R}$ denote
\begin{displaymath}
  \Var{\mathbb{R}}{\xi} \set \Cov{\mathbb{R}}{\xi}{\xi},
\end{displaymath}
the variance of $\xi$ under $\mathbb{R}$.  For vectors $\mu\in
\mathbf{S}^M$ and $z\in \mathbf{R}^M$ we shall use a similar notation:
\begin{displaymath}
  \Var{\mu}{z} \set \sum_{i=1}^M \mu^i (z^i)^2 - (\sum_{i=1}^M \mu^i
  z^i)^2 
\end{displaymath}
interpreted as the variance of the random variable $z$ defined on the
sample space $\{1,\dots,M\}$ with respect to the probability measure
$\mu$.  For $a\in \mathbf{A}$ define $\rho(a)\in
\mathbf{L}^0(\mathbf{S}^M)$, the vector of relative risk-tolerance
weights, by
\begin{equation}
  \label{eq:36}
  \rho^m(a) \set \frac{\tau^m(a)}{\sum_{k=1}^M \tau^k(a)} =
  \frac{\tau^m(a)}{R_1(a)}, \quad m=1,\dots, M. 
\end{equation}

\begin{Theorem}
  \label{th:5}
  Under Assumptions~\ref{as:1}, \ref{as:2}, and \ref{as:3}, the
  second-order expansion for the market indifference prices $x=x(q)$
  with respect to the order $q\in \mathbf{R}^J$ can be written as
  \begin{equation}
    \label{eq:37}
    \begin{split}
      x(q+\Delta q) - x(q) &= -\mathbb{E}_{\mathbb{Q}}[\ip{\Delta
        q}{\psi}] + \frac{R_0}2
      \mathbb{E}_{\mathbb{R}}\left[\left(\frac{d\mathbb{Q}}{d\mathbb{R}}\right)^2
        \Var{\rho}{Z\Delta q}\right]
      \\
      &+\frac1{2R_0}\left\{
        \left(\cov_{\mathbb{R}}[\frac{d\mathbb{Q}}{d\mathbb{R}},\ip{\Delta
            q}{\psi}]\right)^2
        + \Var{\mathbb{R}}{\ip{\Delta q}{\psi}}\right\} \\
      & + o(\abs{\Delta q}^2), \quad \Delta q\to 0,
    \end{split}
  \end{equation}
  where $Z=Z(q)$ is defined in \eqref{eq:32} and where we omitted the
  argument $a(q) = (w(q),x(q),q)$.
\end{Theorem}

\begin{proof}
  The linear term in \eqref{eq:37} follows from \eqref{eq:26}. To
  verify the second-order part we decompose the matrix $A=A(a(q))$
  from~\eqref{eq:29} as
  \begin{displaymath}
    A = {R_0}(S + ss^T),
  \end{displaymath}
  where, for $l,m=1,\dots,M$,
  \begin{align*}
    S^{lm} &\set \frac1{R_0^2}
    \mathbb{E}_{\mathbb{R}}[\tau^l(\delta_{lm}\sum_{k=1}^M
    \tau^k - \tau^m)], \\
    s^l &\set \frac1{R_0}\mathbb{E}_{\mathbb{R}}[\tau^l].
  \end{align*}
  Denote $\idvec\set(1,\dots,1)\in \mathbf{R}^M$ and observe that
  \begin{align}
    \label{eq:38}
    S\idvec &= (\idvec^T S)^T = 0, \\
    \label{eq:39}
    \ip{s}{\idvec} &= \frac1{R_0}\mathbb{E}_{\mathbb{R}}[\sum_{k=1}^M
    \tau^k] = \frac1{R_0}\mathbb{E}_{\mathbb{R}}[R_1] = 1.
  \end{align}

  Let $E=E(q)$ and $H=H(q)$ be the matrices defined by~\eqref{eq:34}
  and~\eqref{eq:35}. Using the fact that $A=A^T$ we can write $H$ as
\begin{displaymath}
  H = E^T A E + D. 
\end{displaymath}
By Theorem~\ref{th:4}, $H=H(q)$ is the Hessian matrix for
$x=x(q)$. Hence, the second-order part in the expansion of $x(q+\Delta
q)-x(q)$ with respect to $\Delta q$ is given by
  \begin{align*}
    \frac12\ip{\Delta q}{H\Delta q} &= \frac12\ip{E\Delta q}{AE\Delta
      q} + \frac12\ip{\Delta q}{D\Delta q} \\ &=
    \frac{R_0}2\ip{E\Delta q}{SE\Delta q} +\frac{R_0}2(\ip{E\Delta
      q}{s})^2 + \frac12\ip{\Delta q}{D\Delta q}.
  \end{align*}
  Since, by \eqref{eq:32}, $(E-Z)\Delta q$ is the product of some
  scalar on the vector $\idvec$, \eqref{eq:38} implies that
  $\ip{E\Delta q}{SE\Delta q} = \ip{Z\Delta q}{SZ\Delta q}$.  Observe
  that, in view of \eqref{eq:27}, the matrix $S$ can be written as
  \begin{align*}
    S^{lm} = \frac1{R_0^2} \mathbb{E}_{\mathbb{R}}[R^2_1
    \rho^l(\delta_{lm} - \rho^m)] =
    \mathbb{E}_{\mathbb{R}}[\left(\frac{d\mathbb{Q}}{d\mathbb{R}}\right)^2
    \rho^l(\delta_{lm} - \rho^m)].
  \end{align*}
  It follows that
  \begin{displaymath}
    \ip{Z\Delta q}{SZ\Delta q} =
    \mathbb{E}_{\mathbb{R}}[\left(\frac{d\mathbb{Q}}{d\mathbb{R}}\right)^2
    \Var{\rho}{Z\Delta q}], 
  \end{displaymath}
  giving the first quadratic term in \eqref{eq:37}.

  By \eqref{eq:34}, $\ip{\idvec}{(AE + C)\Delta q} = 0$, which, in
  view of \eqref{eq:38} and \eqref{eq:39}, implies that
  \begin{displaymath}
    R_0\ip{s}{E\Delta q} + \ip{\idvec}{C\Delta q} = 0.
  \end{displaymath}
  From the construction of $C$ in \eqref{eq:30} and accounting again
  for \eqref{eq:27} we deduce the second quadratic term in
  \eqref{eq:37}:
  \begin{align*}
    \ip{s}{E\Delta q} &= - \frac1{R_0} \ip{\idvec}{C\Delta q} =
    -\frac1{R^2_0} \cov_{\mathbb{R}}[R_1,\ip{\Delta q}{\psi}] \\
    &= -
    \frac1{R_0}\cov_{\mathbb{R}}[\frac{d\mathbb{Q}}{d\mathbb{R}},\ip{\Delta
      q}{\psi}].
  \end{align*}
 
  Finally, the expression \eqref{eq:31} for $D$ yields the last term:
  \begin{displaymath}
    \ip{\Delta q}{D\Delta q} = \frac1{R_0}\Var{\mathbb{R}}{\ip{\Delta q}{\psi}}.
  \end{displaymath}
\end{proof}

Observe that the linear term in~\eqref{eq:37} corresponds to the
``standard'' model of mathematical finance, where a \emph{``small''
  investor} can trade \emph{any number} of securities $\psi$ at
``fixed'' or \emph{exogenous} prices $\mathbb{E}_{\mathbb{Q}}[\psi]$.
The second, quadratic, component can thus be viewed as a \emph{price
  impact correction} to this model.  Note that all three terms of the
quadratic part are \emph{non-negative} and the last term,
$\Var{\mathbb{R}}{\ip{\Delta q}{\psi}}$, equals zero iff $\ip{\Delta
  q}{\psi}=\const$. Hence, for any non-trivial transaction our large
investor will have to pay a \emph{strictly positive penalty} due to
his price impact in comparison with a hypothetical small agent trading
at $\mathbb{E}_{\mathbb{Q}}[\psi]$.

To get a further insight into the microstructure of our model we
provide still another second-order expansion of $x=x(q)$ along with
the first-order expansion of the post-trade Pareto allocations $\pi =
\pi(q) \set \pi(a(q))$.  We introduce probability measures
$\mathbb{R}^m(a)$, $m=1,\dots,M$, such that
\begin{displaymath}
  \frac{d\mathbb{R}^m(a)}{d\mathbb{Q}(a)} \set
  \frac{1/\tau^m(a)}{\mathbb{E}_{\mathbb{Q}(a)}[1/\tau^m(a)]}, \quad 
  m=1,\dots,M, \; a \in \mathbf{A};
\end{displaymath}
they are analogs of $\mathbb{R}(a)$ for individual market makers.  We
shall also make use of the random matrix $\zeta = \zeta(a)$ given, for
$a \in \mathbf{A}$, by
\begin{equation}
  \label{eq:40}
  \zeta^{mj}(a) \set 
  R_1(a) \rho^m(a)\left(E^{mj}(a) - \sum_{l=1}^M \rho^l(a)
    E^{lj}(a)\right),
\end{equation}
where $ m=1,\dots,M$, $j=1,\dots,J$, and the matrix $E=E(a)$ is
defined in~\eqref{eq:34}. We denote by $\zeta^m$ the $m$th row of
$\zeta$ and observe that
\begin{equation}
  \label{eq:41}
  \sum_{m=1}^M \zeta^{m} = 0.
\end{equation}

\begin{Theorem}
  \label{th:6}
  Under Assumptions~\ref{as:1}, \ref{as:2}, and \ref{as:3}, the
  first-order expansion, in the almost sure sense, of the Pareto
  optimal allocations $\pi = \pi(q) \set \pi(a(q))$ is given by
  \begin{equation}
    \label{eq:42}
    \begin{split}
      \pi^m(q+\Delta q) - \pi(q) =& \rho^m \left(\ip{\psi}{\Delta q} -
        \mathbb{E}_{\mathbb{Q}}[\ip{\psi}{\Delta q}]\right) +
      \ip{\zeta^m}{\Delta q} \\
      & + o(\abs{\Delta q}^2), \quad \Delta q\to 0, \; m=1,\dots,M,
    \end{split}
  \end{equation} 
  and the second-order expansion for the market indifference prices
  $x=x(q)$ can be written as
  \begin{equation}
    \label{eq:43}
    \begin{split}
      x(q+\Delta q) - x(q) =& -\mathbb{E}_{\mathbb{Q}}[\ip{\Delta
        q}{\psi}]
      \\
      &+\frac12\Bigg\{\mathbb{E}_{\mathbb{Q}}[1/R_1]\mathbb{E}_{\mathbb{R}}
      \left[\ip{\psi-\mathbb{E}_{\mathbb{Q}}[\psi]}{\Delta
          q}^2\right]\\
      &\qquad+\sum_{m=1}^M \mathbb{E}_{\mathbb{Q}}[1/\tau^m]
      \mathbb{E}_{\mathbb{R}^m}\left[\ip{\zeta^m}{\Delta
          q}^2\right]\Bigg\}\\
      & + o(\abs{\Delta q}^2), \quad \Delta q\to 0,
    \end{split}
  \end{equation}
  where we omitted the argument $a(q) = (w(q),x(q),q)$.
\end{Theorem}
\begin{proof}
  To verify~\eqref{eq:42} recall that if $\widehat x^m = \widehat
  x^m(v,x)$, $m=1,\dots,M$, is the argmax vector in the construction
  of the aggregate utility:
  \begin{displaymath}
    r(v,x) \set \sup_{x^1 + \dots + x^M = x} \sum_{m=1}^M v^m u_m(x^m) =
    \sum_{m=1}^M v^m u_m(\widehat x^m),
  \end{displaymath}
  then its partial derivatives are given by
  \begin{align*}
    \frac{\partial \widehat x^m}{\partial x}(v,x) & =
    \frac{t_m(\widehat
      x^m)}{\sum_{k=1}^M t_k(\widehat x^k)}, \\
    v^l \frac{\partial \widehat x^m}{\partial v^l}(v,x) & = v^m
    \frac{\partial \widehat x^l}{\partial v^m}(v,x) = t_m(\widehat
    x^m)\left(\delta_{lm} - \frac{t_l(\widehat x^l)}{\sum_{k=1}^M
        t_k(\widehat x^k)}\right),
  \end{align*}
  see, e.g., Theorem~3.2 in \cite{BankKram:13b}, where
  \begin{displaymath}
    t_m(x) \set \frac1{a_m(x)} =
    -\frac{u'_m(x)}{u''_m(x)}, \quad x\in \mathbf{R}, \; m=1,\dots,M, 
  \end{displaymath}
  are the market makers' risk-tolerances. It follows that, for
  $a=(v,x,q)$,
  \begin{align*}
    \frac{\partial \pi^m}{\partial x}(a) & = \rho^m(a), \\
    \frac{\partial \pi^m}{\partial q^j}(a) & = \rho^m(a) \psi^j,
    \\
    v^l\frac{\partial \pi^m}{\partial v^l}(a) & = R_1(a) \rho^m(a)
    \left(\delta_{lm} - \rho^l(a)\right).
  \end{align*}
  Combining these identities with~\eqref{eq:26} and~\eqref{eq:32} and
  omitting, as usual, the argument $a(q) = (w(q),x(q),q)$ we obtain
  \begin{align*}
    \frac{\partial }{\partial q^j} \pi^m(q) &= \rho^m \frac{\partial
      x(q)}{\partial q^j} + \rho^m \psi^j + R_1(a)\rho^m(a)
    \sum_{l=1}^M \left(\delta_{lm} - \rho^l(a)\right)
    \frac1{w^l(q)}\frac{\partial w^l(q)}{\partial
      q^j} \\
    &= \rho^m (-\mathbb{E}_{\mathbb{Q}}[\psi^j]) + \rho^m \psi^j  \\
    &\quad + R_1(a)\rho^m(a) \sum_{l=1}^M \left(\delta_{lm} -
      \rho^l(a)\right) \left(E^{lj}(q) - \sum_{k=1}^M
      w^k(q)E^{kj}(q)\right) \\
    &= \rho^m (\psi^j-\mathbb{E}_{\mathbb{Q}}[\psi^j]) + \zeta^{mj},
  \end{align*}
  which yields~\eqref{eq:42}.

  To verify the second-order expansion~\eqref{eq:43} for $x=x(q)$ we
  shall match its terms with those of~\eqref{eq:37}. Recall that
  ${d\mathbb{Q}}/{d\mathbb{R}} = {R_1}/{R_0}$ and observe that for a
  random variable $\xi$
  \begin{align*}
    \mathbb{E}_{\mathbb{R}}[(\xi - \mathbb{E}_{\mathbb{Q}}[\xi])^2 &=
    \mathbb{E}_{\mathbb{R}}[(\xi - \mathbb{E}_{\mathbb{R}}[\xi])^2 +
    (\mathbb{E}_{\mathbb{R}}[\xi] -
    \mathbb{E}_{\mathbb{Q}}[\xi])^2 \\
    &= \Var{\mathbb{R}}{\xi} +
    \left(\cov_{\mathbb{R}}[\frac{d\mathbb{Q}}{d\mathbb{R}},\xi]\right)^2.
  \end{align*}
  This shows that the second term in~\eqref{eq:43} coincides with the
  sum of the last two terms in~\eqref{eq:37}.

  If now $\eta = (\eta^m)_{m=1,\dots,M}$ is a vector of random
  variables such that $\ip{\rho}{\eta} = \sum_{m=1}^M \rho^m \eta^m =
  0$, then
  \begin{align*}
    &\sum_{m=1}^M \mathbb{E}_{\mathbb{Q}}[1/{\tau^m}]
    \mathbb{E}_{\mathbb{R}^m}[(\tau^m \eta^m)^2] = \sum_{m=1}^M
    \mathbb{E}_{\mathbb{Q}}[\tau^m (\eta^m)^2] \\
    \quad & = R_0 \sum_{m=1}^M
    \mathbb{E}_{\mathbb{R}}\left[\left(\frac{d\mathbb{Q}}{d\mathbb{R}}\right)^2
      \rho^m (\eta^m)^2\right] = R_0
    \mathbb{E}_{\mathbb{R}}\left[\left(\frac{d\mathbb{Q}}{d\mathbb{R}}\right)^2
      \Var{\rho}{\eta}\right].
  \end{align*}
  Choosing
  \begin{displaymath}
    \eta^m = \frac1{\tau^m}\ip{\zeta^m}{\Delta q}, \quad m=1,\dots,M, 
  \end{displaymath}
  we deduce from~\eqref{eq:41} that $\ip{\rho}{\eta} = 0$ and then
  from \eqref{eq:40} and~\eqref{eq:32} that
  \begin{displaymath}
    \eta = E\Delta q + z_1 \idvec = Z\Delta q + z_2 \idvec, 
  \end{displaymath}
  where $\idvec = \braces{1,\dots,1}$ and $z_1$ and $z_2$ are some
  constants. It follows that
  \begin{displaymath}
    \Var{\rho}{\eta} = \Var{\rho}{E\Delta q} = \Var{\rho}{Z\Delta q}, 
  \end{displaymath}
  proving the equality between the third term in~\eqref{eq:43} and the
  second term in~\eqref{eq:37}.
\end{proof}

The expansion~\eqref{eq:42} shows that the market makers share the
extra endowment $\ip{\psi-\mathbb{E}_{\mathbb{Q}}[\psi]}{\Delta q}$ in
proportion to their individual risk tolerances and, in addition, make
\emph{zero-sum} trades $\ip{\zeta^m}{\Delta q}$, $m=1,\dots,M$,
between themselves; see~\eqref{eq:41}. As the expansion~\eqref{eq:43}
reveals, the liquidity premium arises from both these components.

It is quite common in the economic literature to replace a collection
of economic agents with a single, \emph{representative}, agent whose
utility function is given by $r(w,\cdot)$, with $r=r(v,x)$
from~\eqref{eq:8}, for some \emph{fixed} weight $w\in
\mathbf{S}^M$. In our case, this simplification yields the linear term
in \eqref{eq:43} as well as the first component of the quadratic part,
but not the remaining ones. It is interesting to obtain conditions for
the second term of the quadratic part in~\eqref{eq:43} or,
equivalently, the second term in~\eqref{eq:37} to vanish, since, then,
the representative agent approximation leads to the identical
expression for the price impact coefficient as our original model with
many market makers. This is accomplished in the following lemma.

\begin{Lemma}
  \label{lem:5}
  Let the conditions of Theorem~\ref{th:5} hold and take $q,\Delta
  q\in \mathbf{R}^J$. Then the following assertions are equivalent:
  \begin{enumerate}
  \item \label{item:5}
    $\mathbb{E}_{\mathbb{R}^m}\left[\ip{\zeta^m}{\Delta
        q}^2\right]=0$, $m=1,\dots,M$.
  \item \label{item:6}
    $\mathbb{E}_{\mathbb{R}}\left[\left(\frac{d\mathbb{Q}}{d\mathbb{R}}\right)^2 
      \Var{\rho}{Z(q)\Delta q}\right] = 0$;
  \item \label{item:7} $Z(q)\Delta q = 0$;
  \item \label{item:8} $\mathbb{E}_{\mathbb{Q}^m}[\ip{\Delta q}{\psi}]
    = \mathbb{E}_{\mathbb{Q}}[\ip{\Delta q}{\psi}]$, $m=1,\dots,M$,
  \end{enumerate}
  where, for $a\in \mathbf{A}$,
  \begin{displaymath}
    \frac{d\mathbb{Q}^m(a)}{d\mathbb{Q}(a)} \set
    \frac{\rho^m(a)}{\mathbb{E}_{\mathbb{Q}}[\rho^m(a)]}, \quad m=1,\dots,M,
  \end{displaymath}
  and we omitted the argument $a(q) \set (w(q),x(q),q)$.
\end{Lemma}

\begin{proof} 
  \ref{item:5} $\Longleftrightarrow$ \ref{item:6}: This follows from
  the equality:
  \begin{displaymath}
    \sum_{m=1}^M \mathbb{E}_{\mathbb{Q}}[1/\tau^m]
    \mathbb{E}_{\mathbb{R}^m}\left[\ip{\zeta^m}{\Delta q}^2\right] = {R_0}
    \mathbb{E}_{\mathbb{R}}\left[\left(\frac{d\mathbb{Q}}{d\mathbb{R}}\right)^2
      \Var{\rho}{Z\Delta q}\right]
  \end{displaymath}
  which was part of the proof of Theorem~\ref{th:6}.

  \ref{item:6} $\Longleftrightarrow$ \ref{item:7}: Clearly,
  item~\ref{item:6} holds if and only if $\Var{\rho}{Z(q)\Delta q}=
  0$, which, in turn, is equivalent to $Z(q)\Delta q=y \idvec$ for
  some $y\in \mathbf{R}$. From the construction of the matrix $Z(q)$
  in~\eqref{eq:32} we deduce that $\ip{w(q)}{Z(q)\Delta q} = 0$, where
  the Pareto weights $w(q)$ take values in $\mathbf{S}^M$. It follows
  that $y=0$.

  \ref{item:7} $\Longleftrightarrow$ \ref{item:8}: Denote $\xi \set
  \ip{\Delta q}{\psi} \in \mathbf{L}^0(\mathbf{R})$. From the
  definition of the measures $\mathbb{Q}^m$, $m=1,\dots,M$, we deduce
  the equivalence of item~\ref{item:8} to
  \begin{equation}
    \label{eq:44}
    \mathbb{E}_{\mathbb{Q}^m}[\xi] =
    \mathbb{E}_{\mathbb{Q}^1}[\xi], \quad m=2,\dots,M. 
  \end{equation}
  From~\eqref{eq:32} we deduce that $Z(q)\Delta q=0$ if and only if $E
  \Delta q = y\idvec$ for some $y\in \mathbf{R}$, where the matrix
  $E=E(q)$ satisfies~\eqref{eq:34}. Hence, item~\ref{item:7} is
  equivalent to the existence of a constant $y\in \mathbf{R}$ such
  that
  \begin{equation}
    \label{eq:45}
    y A\idvec  + C\Delta q = 0,  
  \end{equation}
  From the expressions~\eqref{eq:29} and~\eqref{eq:30} for the
  matrices $A = A(a(q))$ and $C=C(a(q))$ we obtain
  \begin{align*}
    (A\idvec)^m & = \mathbb{E}_{\mathbb{R}}[\tau^m] = R_0
    \mathbb{E}_{\mathbb{Q}}[\rho^m], \\
    (C\Delta q)^m &= \frac1{R_0}\cov_{\mathbb{R}}[\tau^m,\xi] =
    \frac1{R_0}\left(\mathbb{E}_{\mathbb{R}}[\tau^m \xi] -
      \mathbb{E}_{\mathbb{R}}[\tau^m]\mathbb{E}_{\mathbb{R}}[\xi]\right) \\
    &= \mathbb{E}_{\mathbb{Q}}[\rho^m]
    \left(\mathbb{E}_{\mathbb{Q}^m}[\xi] -
      \mathbb{E}_{\mathbb{R}}[\xi]\right),\quad m=1,\dots,M,
  \end{align*}
  which clearly implies the equivalence of~\eqref{eq:44}
  and~\eqref{eq:45}.
\end{proof}

The condition of item~\ref{item:8} is clearly satisfied when the
random weights $\rho$ defined in~\eqref{eq:36} are deterministic. This
is the case, for instance, if all market makers have exponential
utilities: $u_m(x) = - \exp(-a_m x)$, with constant risk-aversion
$a_m>0$, $m=1,\dots,M$. Moreover, if the securities $\psi$ form a
\emph{complete} model in the sense that, jointly with the constant
security paying $1$ they span all random variables, than the validity
of item~\ref{item:8} for any $\Delta q \in \mathbf{R}^J$ is in fact
equivalent to $\rho$ being deterministic.

\bibliographystyle{plainnat} \bibliography{../bib/finance}

\end{document}